\documentclass[11pt,a4paper]{article}

\usepackage[T1]{fontenc}
\usepackage{lmodern}
\usepackage{mathrsfs}
\usepackage{pdfpages}
\usepackage[margin=2.6cm]{geometry}
\usepackage{amsmath,amssymb,amsthm,mathtools}
\usepackage{graphicx}
\usepackage{booktabs}
\usepackage{array}
\usepackage{bm}
\usepackage{xcolor}
\usepackage{microtype}
\usepackage[colorlinks=true,
            linkcolor=blue!55!black,
            citecolor=blue!55!black,
            urlcolor=blue!55!black]{hyperref}

\numberwithin{equation}{section}

\newcommand{\RR}{\mathbb R}
\newcommand{\CC}{\mathbb C}
\newcommand{\ZZ}{\mathbb Z}
\newcommand{\QQ}{\mathbb Q}

\newcommand{\cE}{\mathcal E}
\newcommand{\cI}{\mathcal I}
\newcommand{\cP}{\mathcal P}

\newtheorem{theorem}{Theorem}[section]
\newtheorem{proposition}[theorem]{Proposition}
\newtheorem{corollary}[theorem]{Corollary}

\hypersetup{
  pdftitle={Superintegrability and Choreographic Obstructions in Dihedral n-Body Hamiltonian Systems},
  pdfauthor={Adrian M. Escobar-Ruiz and Manuel Fernandez-Guasti}
}

\begin{document}

\title{\bfseries Superintegrability and choreographic obstructions\\
in dihedral $n$-body Hamiltonian systems}
\author{A. M. Escobar-Ruiz \and M. Fernandez-Guasti}
\date{}
\maketitle

\begin{center}
\small
Departamento de F\'isica, Universidad Aut\'onoma Metropolitana--Iztapalapa,\\
Rafael Atlixco 186, 09340 Ciudad de M\'exico, Mexico\\[2pt]
\href{mailto:admau@xanum.uam.mx}{\texttt{admau@xanum.uam.mx}},
\href{mailto:mfg@xanum.uam.mx}{\texttt{mfg@xanum.uam.mx}}
\end{center}

\begin{abstract}
We study planar $n$-body Hamiltonians with quadratic pair interactions
whose coupling pattern is invariant under the dihedral group $D_n$.  After the
center of mass is removed, the dynamics separates into label-Fourier modes.
This makes it possible to distinguish three notions that are often conflated:
commensurability of the normal frequencies, maximal superintegrability of the
full relative Hamiltonian, and realization of choreographic space--time
symmetry.  We prove that a periodic configuration is $C_n$-equivariant if and
only if every label-Fourier coefficient acquires under the time shift $T/n$ the
character phase prescribed by cyclic relabeling.  For harmonic motion this
criterion determines the complete equivariant subspace of initial data and its
codimension.  Thus a resonant orbit may be periodic without being a
choreography.  The cases $n=4,5,6$ illustrate the obstruction.  In particular,
$n=6$ is the first case with three inequivalent internal branches: the
nondegenerate $1{:}2{:}3$ resonance admits a phase-matched three-branch
choreography, whereas the exactly degenerate $1{:}2{:}2$ spectrum is
incompatible with simultaneous $m=2$ and Nyquist $m=3$ excitation.  A
choreography on the latter surface survives only in a compatible invariant
subspace.  Resonant data outside the equivariant subspace can instead organize
into synchronized multi-trace fragments.
\end{abstract}

\noindent\textbf{Keywords:} choreographies; superintegrable systems; dihedral
symmetry; Fourier characters; resonant oscillators; equivariant dynamics.

\medskip
\noindent\textbf{2020 Mathematics Subject Classification:}
70H06; 37J35; 70F10; 20C15.

\section{Introduction}\label{sec:introduction}

A planar $n$-body choreography is a collision-free periodic motion in which
identical particles follow one closed curve with equal time delays.  Since the
figure-eight solution, choreographies have been studied by variational,
topological, numerical, and symmetry-based methods
\cite{moore1993braids,chenciner2000remarkable,simo2002dynamical,
ferrario2004relative,montgomery1998braid,kapela2007symbolic,Yu2017}.
The possible space--time symmetry groups and the topology of the associated
loop spaces have also been classified
\cite{MontaldiSteckles2013,BarutelloFerrarioTerracini2011}.  Explicit analytic
families remain comparatively rare
\cite{fujiwara2003lemniscate,fernandez2025fourbody,fernandez2025nbody}.

The literature distinguishes simple choreographies, in which every body
traverses one time-shifted trace, from periodic organizations involving more
than one choreographic trace
\cite{ChencinerGerverMontgomerySimo2002,barutello2006double}.  This distinction
motivates our use of \emph{fragmentation} for synchronized multi-trace motion;
the examples below are exact harmonic solutions and are not obtained through
an action-minimization argument.

Here we use an exactly solvable class of quadratic $D_n$-invariant Hamiltonians
to isolate a representation-theoretic obstruction to choreography.  The
dihedral symmetry acts on particle labels: the coupling assigned to a pair
depends only on its separation on an abstract regular $n$-gon.  It does not
require the instantaneous configuration in the physical plane to be a regular
polygon.  The label-space quadratic form is circulant and is diagonalized by a
discrete Fourier transform.  Its real normal modes form $D_n$-sectors, whereas
the cyclic subgroup $C_n$ resolves each non-self-conjugate sector into two
characters with opposite phases.

This distinction separates the spectral and equivariant parts of the problem.
Commensurability among the modes active on a particular solution closes that
solution.  Commensurability of \emph{all} independent internal frequencies
makes the full center-of-mass-reduced quadratic Hamiltonian maximally
superintegrable
\cite{Evans1990Superintegrability,miller2013superintegrability,
Perelomov1990IntegrableSystems,tempesta2004superintegrability}. Neither periodicity nor maximal superintegrability determines how an orbit
transforms under particle relabeling. In particular, maximal
superintegrability guarantees closure of the bounded trajectories but does
not imply choreography. Particular integrals provide a complementary
mechanism by identifying invariant submanifolds that may support
choreographic motion~\cite{turbiner2013particular,escobar2024particular}.
Such a reduction is not sufficient by itself: the restricted dynamics must
also satisfy the character-wise phase conditions associated with cyclic
relabeling. In the present quadratic systems, these conditions are resolved
exactly under the time shift \(T/n\).

Our main results are as follows.  First, we give a Fourier criterion equivalent
to full $C_n$-equivariance and specialize it to harmonic and one-sided
traveling-wave solutions.  Second, for a fixed commensurate spectrum we
characterize the full linear subspace of equivariant initial data and compute
its codimension.  This replaces a qualitative appeal to phase locking by an
explicit condition on the positive- and negative-frequency amplitudes.  Third,
we analyze $n=4,5,6$.  The first two cases have only two inequivalent internal
branches.  The case $n=6$ has two doublets and a Nyquist singlet and therefore
distinguishes, for the first time, nondegenerate three-branch resonance from an
additional degeneracy between inequivalent characters.  The phase-matched
$1{:}2{:}3$ spectrum and the incompatible $1{:}2{:}2$ spectrum provide the
central comparison.

The present approach complements both the global classification of
choreographic symmetry groups \cite{MontaldiSteckles2013} and inverse-design
constructions in which a target curve is prescribed and the interactions are
recovered \cite{fernandez2025nbody}.  Our question is instead spectral: for a
fixed $D_n$-invariant quadratic Hamiltonian, which resonant initial data realize
the cyclic space--time symmetry?  It also extends the four-body analysis of
superintegrability and fragmentation in \cite{escobar2025four} to a general
character-level criterion and to the first three-branch case.

Section~\ref{sec:fourier} develops the normal-mode and Fourier framework.
Section~\ref{sec:initial-subspace} characterizes the equivariant initial-data
subspace.  Sections~\ref{sec:low-n} and~\ref{sec:n6} treat the low-$n$ examples,
and Section~\ref{sec:fragmentation} discusses multi-trace motion and the
large-$n$ structure.

\section{Dihedral harmonic networks and Fourier characters}
\label{sec:fourier}

\subsection{Hamiltonian and normal-mode spectrum}

Consider $n$ identical particles of mass $\mu$ in the plane, with positions
$\bm r_i\in\RR^2$ and conjugate momenta $\bm p_i$.  The Hamiltonian is
\begin{equation}
\mathcal H_n\ = \ 
\frac{1}{2\,\mu}\sum_{i=1}^n\bm p_i^2
\ + \ \frac{\mu\,\omega^2}{2}
\sum_{k=1}^{\lfloor n/2\rfloor}\kappa_k^{(n)}
\sum_{i=1}^n\lVert\bm r_i-\bm r_{i+k}\rVert^2,
\label{Hng}
\end{equation}
where indices are understood modulo $n$, $\omega>0$, and the couplings are
real.  Equation~\eqref{Hng} uses the opposite-pair double-count convention:
when $n$ is even, every pair in the $k=n/2$ term occurs twice.  If opposite
pairs are instead written once, then
\begin{equation}
\kappa_{n/2}^{(n)}\big|_{\rm single}
=2\kappa_{n/2}^{(n)}\big|_{\rm double}.
\label{eq:normalization}
\end{equation}

The center of mass decouples.  We impose
\[
\sum_{i=1}^n\bm r_i=0,
\qquad
\sum_{i=1}^n\bm p_i=0,
\]
identify the physical plane with $\CC$, and write $r_i=x_i+iy_i$.  With
$\zeta=e^{2\pi i/n}$, define
\begin{equation}
U_m=\frac{1}{\sqrt n}\sum_{j=1}^n r_j\zeta^{-m(j-1)},
\qquad
r_j=\frac{1}{\sqrt n}\sum_{m=0}^{n-1}U_m\zeta^{m(j-1)}.
\label{eq:DFT}
\end{equation}
The planar coefficients $U_m\in\CC$ are independent complex coordinates; in
particular, the scalar relation $U_{n-m}=\overline{U_m}$ need not hold.  The
corresponding real Cartesian-vector Fourier coefficients do satisfy the usual
componentwise reality relation.  The Supplementary Material gives the precise
dictionary between these two descriptions.

\begin{proposition}[Normal-mode spectrum]\label{prop:normal-modes}
In the center-of-mass frame, the label-Fourier coefficients satisfy
\begin{equation}
\ddot U_m+\Omega_m^2U_m=0,
\qquad
\Omega_m^2=4\omega^2
\sum_{k=1}^{\lfloor n/2\rfloor}
\kappa_k^{(n)}\sin^2\!\left(\frac{\pi mk}{n}\right),
\label{eq:general-spectrum}
\end{equation}
for $m=1,\ldots,n-1$.  Moreover,
$\Omega_m=\Omega_{n-m}$.  Hence $(m,n-m)$ forms one real cosine--sine
$D_n$-sector; for even $n$, $m=n/2$ is the self-conjugate Nyquist sector.
The bounded stable region is determined by $\Omega_m^2>0$ for every nonzero
character.
\end{proposition}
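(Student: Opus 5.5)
The plan is to write the equations of motion from \eqref{Hng} in complex form and then diagonalize the resulting circulant operator with the discrete Fourier transform \eqref{eq:DFT}. Identifying the plane with $\CC$, Hamilton's equations give
\[
\mu\ddot r_i=-\frac{\partial V}{\partial \bar r_i}\cdot 2 ,
\]
where $V$ is the quadratic potential. Computing this directly, each term $\lVert r_i-r_{i+k}\rVert^2$ in the sum over $i$ contributes to the force on particle $j$ both through $i=j$ and through $i=j-k$. This yields
\[
\ddot r_j=-\omega^2\sum_{k=1}^{\lfloor n/2\rfloor}\kappa_k^{(n)}\bigl(2r_j-r_{j+k}-r_{j-k}\bigr).
\]
This step is where the double-count convention of \eqref{Hng} matters. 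For even $n$ and $k=n/2$ we have $r_{j+k}=r_{j-k}$, and the sum over $i$ contains every opposite pair twice. The formula above therefore holds uniformly in $k$. I expect this bookkeeping to be the only delicate point. I would check it separately for $k=n/2$: the pair $\{j,j+n/2\}$ appears twice, giving force $-2\mu\omega^2\kappa_{n/2}(r_j-r_{j+n/2})$, which agrees with the general formula.

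Next I substitute $r_j=n^{-1/2}\sum_m U_m\zeta^{m(j-1)}$. Because the right-hand side is a circulant linear operator, each character $\zeta^{m(j-1)}$ is an eigenvector. A shift by $\pm k$ multiplies it by $\zeta^{\pm mk}$, so the eigenvalue is
\[
2-\zeta^{mk}-\zeta^{-mk}=2-2\cos(2\pi mk/n)=4\sin^2(\pi mk/n).
\]
By linear independence of the characters, equivalently by applying the inverse transform, this gives $\ddot U_m=-\Omega_m^2U_m$ with $\Omega_m^2$ as in \eqref{eq:general-spectrum}.

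The center-of-mass condition is exactly $U_0=0$, which is why only $m=1,\dots,n-1$ appear. The $U_m$ are independent complex coordinates, and the operator has real coefficients, so no reality constraint is needed here. The symmetry $\Omega_m=\Omega_{n-m}$ follows from
\[
\sin^2(\pi(n-m)k/n)=\sin^2(\pi k-\pi mk/n)=\sin^2(\pi mk/n).
\]

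For the sector statement, $m$ and $n-m$ are exchanged by the reflection $j\mapsto -j$ in $D_n$. They span a two-dimensional real cosine--sine representation, namely the combinations $\cos(2\pi m(j-1)/n)$ and $\sin(2\pi m(j-1)/n)$, with a common frequency. When $n$ is even and $m=n/2$, we get $m=n-m$ and the character $(-1)^{j-1}$ is real, so this is the self-conjugate Nyquist sector.

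Finally, each decoupled equation $\ddot U_m+\Omega_m^2U_m=0$ has bounded solutions for all data iff $\Omega_m^2>0$. If $\Omega_m^2=0$ there is linear drift, and if $\Omega_m^2<0$ there is exponential growth. Since $r_j$ is a unitary transform of $(U_m)$, boundedness of all relative motions is equivalent to $\Omega_m^2>0$ for every $m\neq0$.
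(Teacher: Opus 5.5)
Your proof is correct and follows essentially the same route as the paper: both diagonalize the circulant structure with the label-Fourier vectors. The paper works at the level of the quadratic potential, where each bond difference picks up $1-\zeta^{mk}$ and gives $|1-\zeta^{mk}|^2$. You work on the force operator, which yields the identical eigenvalue $2-\zeta^{mk}-\zeta^{-mk}=4\sin^2(\pi mk/n)$. Your explicit check of the $k=n/2$ double count, and your treatment of $U_0=0$, the $m\leftrightarrow n-m$ symmetry, the sectors and stability, simply spell out what the paper summarizes as the standard real decomposition of a symmetric circulant matrix.
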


\begin{proof}
For the Fourier vector $v^{(m)}_j=n^{-1/2}\zeta^{m(j-1)}$, the $k$th bond
difference is multiplied by $1-\zeta^{mk}$.  The corresponding stiffness
eigenvalue is therefore
\[
\omega^2\sum_k\kappa_k^{(n)}
\lvert1-\zeta^{mk}\rvert^2
=4\omega^2\sum_k\kappa_k^{(n)}
\sin^2\!\left(\frac{\pi mk}{n}\right).
\]
It is unchanged under $m\mapsto n-m$.  The remaining statements follow from
the standard real decomposition of a symmetric circulant matrix.
\end{proof}

There are $d=2(n-1)$ relative degrees of freedom.  In the stable region the
system is Liouville integrable.  If every independent branch frequency is
rationally commensurate, the resonant oscillator is maximally
superintegrable and admits $2d-1$ functionally independent first integrals on
a dense open set \cite{Evans1990Superintegrability,miller2013superintegrability}.
If only a proper subset of branches is commensurate, this conclusion applies
to the associated invariant subsystem, not to the full Hamiltonian.

\subsection{Choreography and character-wise phase matching}

A collision-free $T$-periodic configuration
$R(t)=(r_1(t),\ldots,r_n(t))\in\CC^n$ is a single-trace choreography if
\begin{equation}
r_j(t)=r_1\!\left(t+\frac{(j-1)T}{n}\right),
\qquad j=1,\ldots,n.
\label{eq:choreo-def}
\end{equation}
No geometric $n$-fold symmetry of the generating curve is assumed.  Let
\[
c(z_1,\ldots,z_n)=(z_2,\ldots,z_n,z_1).
\]
The configuration is fully $C_n$-equivariant when
\begin{equation}
cR(t)=R\!\left(t+\frac{T}{n}\right).
\label{eq:configuration-equivariance}
\end{equation}

\begin{theorem}[Fourier criterion for $C_n$-equivariance]
\label{thm:phase-matching}
Let $R(t)$ be a $T$-periodic solution and let $U_m(t)$ be its label-Fourier
coefficients.  Then \eqref{eq:configuration-equivariance} holds if and only if
\begin{equation}
U_m\!\left(t+\frac{T}{n}\right)=\zeta^mU_m(t),
\qquad m=0,\ldots,n-1.
\label{eq:character-equivariance}
\end{equation}
In the center-of-mass frame the condition for $m=0$ is automatic.
\end{theorem}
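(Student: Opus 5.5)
The plan is to exploit the fact that the discrete Fourier transform \eqref{eq:DFT} is a linear isomorphism $\mathcal F:\CC^n\to\CC^n$, $R\mapsto(U_0,\ldots,U_{n-1})$, which diagonalizes the cyclic shift $c$. Once we know how $\mathcal F$ intertwines $c$ with multiplication by characters, the equivariance condition \eqref{eq:configuration-equivariance} can be transported to Fourier space componentwise. In that form it is exactly \eqref{eq:character-equivariance}.

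First I compute the Fourier coefficients of $cR$. Since $(cR)_j=r_{j+1}$ with indices modulo $n$, substituting $j\mapsto j-1$ in the sum gives
\[
(\mathcal F cR)_m=\frac{1}{\sqrt n}\sum_{j=1}^n r_{j+1}\zeta^{-m(j-1)}
=\zeta^{m}\,\frac{1}{\sqrt n}\sum_{j=1}^n r_j\zeta^{-m(j-1)}=\zeta^mU_m .
\]
Here the reindexing is legitimate because $\zeta^n=1$, so the exponent is only defined modulo $n$. Next, by linearity and the time-independence of $\mathcal F$, the Fourier coefficients of the shifted configuration $R(t+T/n)$ are $U_m(t+T/n)$.

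Since $\mathcal F$ is injective, \eqref{eq:configuration-equivariance} holds for a given $t$ if and only if $\mathcal F cR(t)=\mathcal F R(t+T/n)$. Componentwise, this is $\zeta^mU_m(t)=U_m(t+T/n)$ for every $m$. Quantifying over all $t$ gives the equivalence in both directions simultaneously.

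For the final claim, the center-of-mass constraint $\sum_j r_j=0$ gives $U_0\equiv0$, so the $m=0$ condition reads $0=\zeta^0\cdot 0$ and holds trivially. This step relies on the convention that $R$ is taken in the reduced frame. If it were not, $U_0=n^{-1/2}\sum_j r_j$ would need to satisfy $U_0(t+T/n)=U_0(t)$.

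There is no real obstacle here. The only point requiring care is the sign convention in the exponent: the shift $c$ produces $\zeta^{+m}$ rather than $\zeta^{-m}$ because of the choice of $\zeta^{-m(j-1)}$ in the forward transform \eqref{eq:DFT}, and I will check this against the index labeling $j=1,\ldots,n$. Note also that $T$-periodicity is not actually used in the equivalence. It enters only to make the definition of equivariance meaningful.
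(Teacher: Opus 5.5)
Your proof is correct and is essentially the paper's argument. The paper expands $R(t)=\sum_m U_m(t)v^{(m)}$ in the eigenbasis $cv^{(m)}=\zeta^m v^{(m)}$ and compares coefficients, whereas you apply the forward transform and use its injectivity — the same diagonalization of $c$ seen from the dual side — and your explicit check that $U_0\equiv 0$ in the center-of-mass frame supplies the justification of the $m=0$ clause that the paper leaves implicit.
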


\begin{proof}
Writing $R(t)=\sum_mU_m(t)v^{(m)}$ and using
$cv^{(m)}=\zeta^mv^{(m)}$ gives
\[
cR(t)=\sum_m\zeta^mU_m(t)v^{(m)},
\qquad
R\!\left(t+\frac{T}{n}\right)
=\sum_mU_m\!\left(t+\frac{T}{n}\right)v^{(m)}.
\]
The Fourier vectors form a basis, so the two configurations agree precisely
when their coefficients agree character by character.
\end{proof}

The sign of $\zeta^m$ follows from the convention chosen for $c$; the opposite
cyclic convention replaces it by $\zeta^{-m}$.  The harmonic form of the
criterion is particularly useful.

\begin{corollary}[Harmonic phase matching]
\label{cor:harmonic}
Suppose
\begin{equation}
U_m(t)=A_m e^{i\Omega_mt}+\widetilde A_m e^{-i\Omega_mt}.
\label{eq:harmonic-decomposition}
\end{equation}
Then character $m$ satisfies \eqref{eq:character-equivariance} if and only if
\begin{equation}
\bigl(e^{i\Omega_mT/n}-\zeta^m\bigr)A_m=0,
\qquad
\bigl(e^{-i\Omega_mT/n}-\zeta^m\bigr)\widetilde A_m=0.
\label{eq:harmonic-matching}
\end{equation}
\end{corollary}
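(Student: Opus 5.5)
The plan is to substitute the harmonic form \eqref{eq:harmonic-decomposition} directly into the character condition \eqref{eq:character-equivariance} of Theorem~\ref{thm:phase-matching}. Then we use linear independence of the two exponentials to split the resulting identity into two separate scalar conditions.

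First, I compute the shifted coefficient:
\[
U_m\!\left(t+\tfrac{T}{n}\right)=A_m e^{i\Omega_mT/n}e^{i\Omega_mt}+\widetilde A_m e^{-i\Omega_mT/n}e^{-i\Omega_mt}.
\]
Subtracting $\zeta^mU_m(t)$ shows that \eqref{eq:character-equivariance} for the fixed index $m$ is equivalent to the identity
\[
\bigl(e^{i\Omega_mT/n}-\zeta^m\bigr)A_m\,e^{i\Omega_mt}+\bigl(e^{-i\Omega_mT/n}-\zeta^m\bigr)\widetilde A_m\,e^{-i\Omega_mt}=0
\qquad\text{for all } t\in\RR .
\]

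The "if" direction is immediate: when both bracketed coefficients vanish, the identity holds for every $t$.

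For the "only if" direction, I use that $t\mapsto e^{i\Omega_mt}$ and $t\mapsto e^{-i\Omega_mt}$ are linearly independent over $\CC$ whenever $\Omega_m\neq0$. Concretely, I evaluate the identity at $t=0$ and differentiate once at $t=0$. This gives the system $a+b=0$ and $i\Omega_m(a-b)=0$ for the two coefficients $a$ and $b$. Since $\Omega_m\neq 0$, this forces $a=b=0$, which is exactly \eqref{eq:harmonic-matching}.

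The only point needing care is the case $\Omega_m=0$, where the two exponentials coincide and the argument fails. I will dispose of it by invoking Proposition~\ref{prop:normal-modes}. In the bounded stable region every nonzero character has $\Omega_m^2>0$, so I take $\Omega_m>0$. The character $m=0$ carries the center of mass, so it vanishes identically in the reduced frame and imposes no condition, as already noted in Theorem~\ref{thm:phase-matching}.

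A zero frequency would in any case give a general solution affine in $t$, which is not of the form \eqref{eq:harmonic-decomposition}. I will therefore state explicitly that the corollary is read with $\Omega_m>0$, and with that convention the proof is complete.
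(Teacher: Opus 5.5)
Your proof is correct and follows the paper's argument. Like the paper, you substitute the harmonic form into the $m$th character condition and use the linear independence of $e^{i\Omega_m t}$ and $e^{-i\Omega_m t}$ to separate the two amplitude conditions.

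Your explicit restriction to $\Omega_m\neq0$ makes precise a hypothesis the paper uses tacitly, and it is genuinely needed. For $\Omega_m=0$ and $m\neq0$, the choice $A_m=-\widetilde A_m\neq0$ satisfies \eqref{eq:character-equivariance} but not \eqref{eq:harmonic-matching}.
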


\begin{proof}
Substitute \eqref{eq:harmonic-decomposition} into
\eqref{eq:character-equivariance} and equate the independent functions
$e^{i\Omega_mt}$ and $e^{-i\Omega_mt}$.
\end{proof}

For a one-sided traveling wave
\begin{equation}
U_m(t)=B_me^{i\Omega_mt},
\qquad B_m\neq0,
\label{eq:one-sided}
\end{equation}
the condition reduces to
\begin{equation}
e^{i\Omega_mT/n}=\zeta^m.
\label{eq:phase-matching}
\end{equation}
If $T=2\pi/\Omega_0$ and $\Omega_m=k_m\Omega_0$ with $k_m\in\ZZ$, then
\begin{equation}
k_m\equiv m\pmod n.
\label{eq:congruence}
\end{equation}
Constant phases multiplying $B_m$ do not affect this congruence; incompatible
temporal orientations or characters must be suppressed rather than repaired by
a relative phase choice.

\begin{corollary}\label{cor:equiv-choreo}
A collision-free periodic configuration satisfying
\eqref{eq:configuration-equivariance} is a single-trace choreography.
\end{corollary}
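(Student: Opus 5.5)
The plan is to read the equivariance relation \eqref{eq:configuration-equivariance} componentwise and then iterate it. By the definition of the shift $c$, the $j$th component of $cR(t)$ is $r_{j+1}(t)$, with indices modulo $n$. Hence \eqref{eq:configuration-equivariance} says exactly that
\[
r_{j+1}(t)=r_j\!\left(t+\frac{T}{n}\right),\qquad j=1,\ldots,n,
\]
with $r_{n+1}\equiv r_1$.

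The first step is an induction on $j$ that yields \eqref{eq:choreo-def}. The case $j=1$ is trivial. Assuming $r_j(t)=r_1(t+(j-1)T/n)$ for all $t$, we get
\[
r_{j+1}(t)=r_j\!\left(t+\frac{T}{n}\right)=r_1\!\left(t+\frac{jT}{n}\right),
\]
which is the claim for $j+1$, valid for $j+1\le n$. For $j=n$ the relation closes consistently: it gives $r_1(t)=r_n(t+T/n)=r_1(t+T)$, which holds by $T$-periodicity. So there is no conflict with the cyclic wraparound.

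The second step is to check the remaining requirements in the definition of a single-trace choreography. The configuration is collision-free and $T$-periodic by hypothesis. Every body then traverses the same closed curve $\gamma=r_1([0,T])$, with equal time delays $T/n$. No geometric symmetry of $\gamma$ is needed, consistent with the remark after \eqref{eq:choreo-def}.

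There is no real obstacle; the only point needing care is the index convention. I will verify that $c$ as defined (left shift) matches the sign of the time shift, so that the delays come out as $+(j-1)T/n$ rather than $-(j-1)T/n$. With the opposite convention the same argument relabels the bodies in reverse order. Optionally, I will note via Theorem~\ref{thm:phase-matching} that this equivariance is equivalent to the character-wise conditions \eqref{eq:character-equivariance}. The corollary can therefore also be read as saying that those Fourier conditions, together with collision-freeness, produce a choreography.
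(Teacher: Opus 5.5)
Your proposal is correct and follows the paper's own proof: you read \eqref{eq:configuration-equivariance} componentwise as $r_{j+1}(t)=r_j(t+T/n)$ and iterate to obtain \eqref{eq:choreo-def}. Your additional check that the wraparound case $j=n$ reduces to $T$-periodicity is a harmless consistency remark that the paper leaves implicit.
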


\begin{proof}
Componentwise, \eqref{eq:configuration-equivariance} gives
$r_{i+1}(t)=r_i(t+T/n)$.  Iteration yields \eqref{eq:choreo-def}.
\end{proof}

\section{The equivariant initial-data subspace}
\label{sec:initial-subspace}

The preceding criterion determines not only whether a selected traveling wave
is choreographic, but the complete set of equivariant initial data for a fixed
commensurate spectrum.  Let $\cP_T$ denote the real vector space of
center-of-mass-free solutions that are periodic with a chosen common period
$T$.  For each periodic character, write
\[
k_m=\frac{\Omega_mT}{2\pi}\in\ZZ_{>0}
\]
and define
\begin{equation}
\chi_m^+=
\begin{cases}
1,& k_m\equiv m\pmod n,\\
0,& \text{otherwise},
\end{cases}
\qquad
\chi_m^-=
\begin{cases}
1,& -k_m\equiv m\pmod n,\\
0,& \text{otherwise}.
\end{cases}
\label{eq:matching-indicators}
\end{equation}

\begin{proposition}[Equivariant initial-data subspace]
\label{prop:equivariant-subspace}
The $C_n$-equivariant solutions of period $T$ form the linear subspace
$\cE_T\subset\cP_T$ defined, in the decomposition
\eqref{eq:harmonic-decomposition}, by
\begin{equation}
A_m=0\quad\text{if }\chi_m^+=0,
\qquad
\widetilde A_m=0\quad\text{if }\chi_m^-=0.
\label{eq:equivariant-amplitudes}
\end{equation}
If $\cI_T\subset\{1,\ldots,n-1\}$ denotes the periodic characters included in
$\cP_T$, then
\begin{align}
\dim_{\RR}\cE_T
&=2\sum_{m\in\cI_T}(\chi_m^++\chi_m^-),
\label{eq:equivariant-dimension}\\
\operatorname{codim}_{\cP_T}\cE_T
&=4\lvert\cI_T\rvert
-2\sum_{m\in\cI_T}(\chi_m^++\chi_m^-).
\label{eq:equivariant-codimension}
\end{align}
Here each allowed complex amplitude contributes two real dimensions.
\end{proposition}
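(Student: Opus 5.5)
The plan is to combine Theorem~\ref{thm:phase-matching} with Corollary~\ref{cor:harmonic}, applied character by character on the periodic space $\cP_T$. First I will parametrize $\cP_T$ explicitly. By Proposition~\ref{prop:normal-modes}, every center-of-mass-free solution has $U_m(t)=A_me^{i\Omega_mt}+\widetilde A_me^{-i\Omega_mt}$ for $m=1,\ldots,n-1$, with $U_0\equiv0$. Since the $U_m$ are independent complex coordinates (no reality constraint links $U_m$ and $U_{n-m}$), the pairs $(A_m,\widetilde A_m)\in\CC^2$ are free. A solution has period $T$ exactly when every excited character satisfies $\Omega_mT\in2\pi\ZZ$.

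I will therefore take $\cP_T$ to consist of the solutions with $A_m=\widetilde A_m=0$ for $m\notin\cI_T$, where $\cI_T$ is the set of characters with $k_m\in\ZZ_{>0}$. This identifies $\cP_T\cong\CC^{2|\cI_T|}$, a real space of dimension $4|\cI_T|$. Stability, $\Omega_m>0$, ensures that $e^{\pm i\Omega_mt}$ are linearly independent, so this parametrization is injective. The map from these amplitudes to initial data $(U_m(0),\dot U_m(0))$ is a real-linear isomorphism, and this justifies calling $\cE_T$ an initial-data subspace.

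Next comes the equivariance condition. Theorem~\ref{thm:phase-matching} says $R\in\cP_T$ is $C_n$-equivariant if and only if \eqref{eq:character-equivariance} holds for each $m$. For $m=0$ this is automatic, and for $m\notin\cI_T$ both sides vanish. For $m\in\cI_T$, Corollary~\ref{cor:harmonic} reduces it to \eqref{eq:harmonic-matching}. Now $e^{i\Omega_mT/n}=e^{2\pi ik_m/n}$, so $e^{i\Omega_mT/n}-\zeta^m=0$ exactly when $k_m\equiv m\pmod n$, that is, when $\chi_m^+=1$. Likewise $e^{-i\Omega_mT/n}=\zeta^m$ exactly when $-k_m\equiv m$, that is, when $\chi_m^-=1$. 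So each scalar equation either holds identically (indicator $1$) or forces the corresponding amplitude to vanish (indicator $0$). This gives \eqref{eq:equivariant-amplitudes}, and because these are coordinate-vanishing conditions, $\cE_T$ is a linear subspace.

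For the counting, $\cE_T$ is the coordinate subspace in which $A_m$ is free when $\chi_m^+=1$ and $\widetilde A_m$ is free when $\chi_m^-=1$. Each free complex amplitude contributes two real dimensions, which gives \eqref{eq:equivariant-dimension}; subtracting from $4|\cI_T|$ gives \eqref{eq:equivariant-codimension}.

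The main obstacle is not the algebra but making the setup precise. One point is that $\cP_T$ should contain exactly the characters in $\cI_T$. The other is the degenerate case $\Omega_m=\Omega_{m'}$ for $m\ne m'$: there I must confirm the decomposition is taken per character, not per frequency eigenspace, which holds because the Fourier vectors $v^{(m)}$ form a basis independently of the spectrum. I will also note that $\chi_m^+$ and $\chi_m^-$ may both equal $1$, for example when $2m\equiv0$ and $k_m\equiv m$, and the count handles this correctly.
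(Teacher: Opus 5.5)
Your proposal is correct and follows the same route as the paper: use $\Omega_mT=2\pi k_m$ to turn the two conditions of Corollary~\ref{cor:harmonic} into the congruences $\pm k_m\equiv m\pmod n$, observe that the resulting constraints are diagonal coordinate-vanishing conditions on $(A_m,\widetilde A_m)$, and count the free complex amplitudes. Your added care in identifying $\cP_T\cong\CC^{2|\cI_T|}$ and in treating degenerate frequencies character by character makes explicit what the paper's short proof leaves implicit.
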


\begin{proof}
Because $\Omega_mT=2\pi k_m$, the two temporal multipliers under $T/n$ are
$e^{\pm2\pi ik_m/n}$.  Corollary~\ref{cor:harmonic} allows $A_m$ precisely
when $k_m\equiv m\pmod n$ and allows $\widetilde A_m$ precisely when
$-k_m\equiv m\pmod n$.  All other amplitudes vanish.  The constraints are
homogeneous and diagonal in Fourier space, so they define a linear subspace.
Counting the allowed complex amplitudes gives
\eqref{eq:equivariant-dimension} and
\eqref{eq:equivariant-codimension}.
\end{proof}

For a non-Nyquist character at most one temporal orientation can satisfy the
condition.  For the Nyquist character $m=n/2$, both orientations are allowed
when $k_m\equiv n/2\pmod n$, because $\zeta^{n/2}=-1$ is its own inverse.
If $\cE_T$ is proper, it has empty interior and zero Lebesgue measure in
$\cP_T$.  Thus generic periodic initial data are not $C_n$-equivariant even
though every orbit in $\cP_T$ closes.  This statement is about initial data on
the exact resonance surface; away from that surface, the motion is generally
quasiperiodic.

This result also sharpens the relation with maximal superintegrability.  A
commensurate full spectrum makes every bounded relative orbit periodic, but
only the subspace $\cE_T$ realizes the choreography generator.  Spectral
degeneracy can enlarge symmetry algebras and the space of first integrals, but
it cannot overcome a mismatch between the temporal multiplier and the
$C_n$-character.

\section{Four- and five-body cases}
\label{sec:low-n}

The cases $n=4$ and $n=5$ each have two independent internal frequencies and
provide compact checks of the general criterion.  The explicit real normal
coordinates, trajectories, and plotted initial data are collected in the
Supplementary Material.

\subsection{Four bodies}

Using a single count for the opposite pairs, the $D_4$-invariant potential is
\[
\begin{aligned}
V_4
={}&
\frac{\mu\omega^2}{2}
\Bigl[
\kappa_1^{(4)}
\left(
r_{12}^2+r_{23}^2+r_{34}^2+r_{14}^2
\right)
+\kappa_2^{(4)}
\left(
r_{13}^2+r_{24}^2
\right)
\Bigr].
\end{aligned}
\]
The characters $(1,3)$ form a doublet with frequency $\Omega_{\rm F}$ and
$m=2$ is the Nyquist character with frequency $\Omega_{\rm N}$, where
\begin{equation}
\Omega_{\rm F}
=\omega\sqrt{2(\kappa_1^{(4)}+\kappa_2^{(4)})},
\qquad
\Omega_{\rm N}=2\omega\sqrt{\kappa_1^{(4)}}.
\label{eq:n4-frequencies}
\end{equation}
Stability requires $\kappa_1^{(4)}>0$ and
$\kappa_1^{(4)}+\kappa_2^{(4)}>0$.  The full relative Hamiltonian is maximally
superintegrable when $p=\Omega_{\rm N}/\Omega_{\rm F}\in\QQ$.  In the
one-sided realization with active characters $m=1,2$ and
$T=2\pi/\Omega_{\rm F}$, phase matching requires
\[
p\equiv2\pmod4.
\]
The primitive choice $p=2$ gives
\begin{equation}
\kappa_2^{(4)}=-\frac12\kappa_1^{(4)}.
\label{eq:n4-primitive}
\end{equation}
For $(\kappa_1^{(4)},\kappa_2^{(4)})=(1,-1/2)$, compatible data give the
single-trace choreography in Fig.~\ref{fig:n4-examples}(b); generic periodic
data can instead produce the two-dimer pattern in
Fig.~\ref{fig:n4-examples}(a).

\begin{figure}[htbp]
\centering
\begin{minipage}[t]{0.48\textwidth}
\centering
\includegraphics[width=0.92\linewidth]{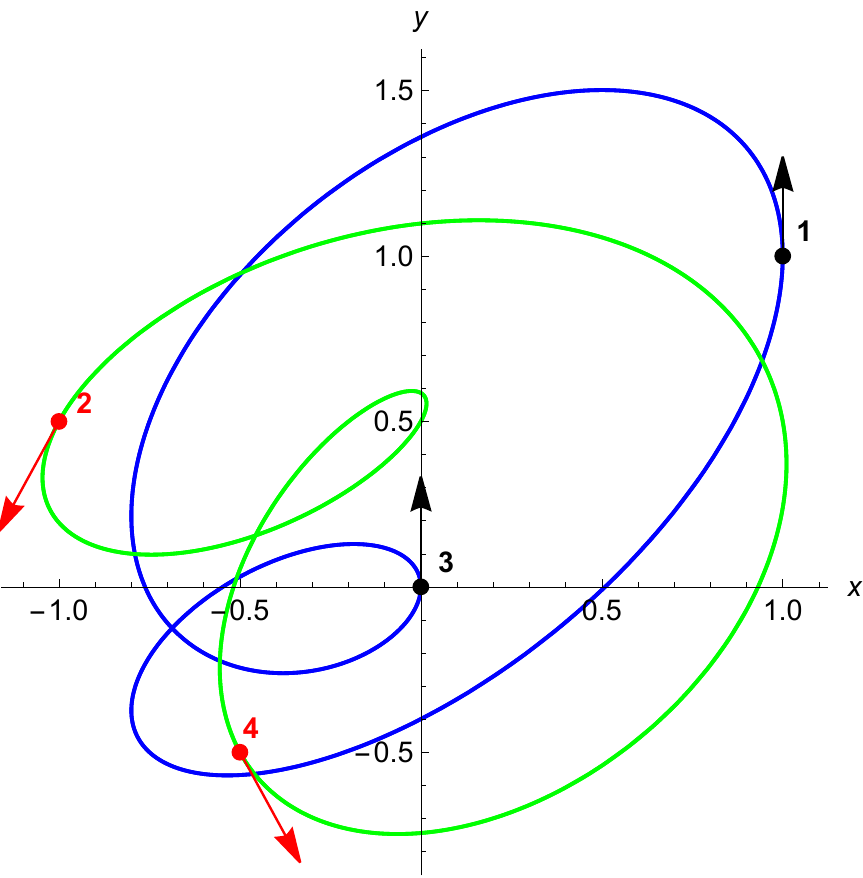}\\[-2pt]
\small (a) $(2+2)$ fragmentation
\end{minipage}\hfill
\begin{minipage}[t]{0.48\textwidth}
\centering
\includegraphics[width=0.84\linewidth]{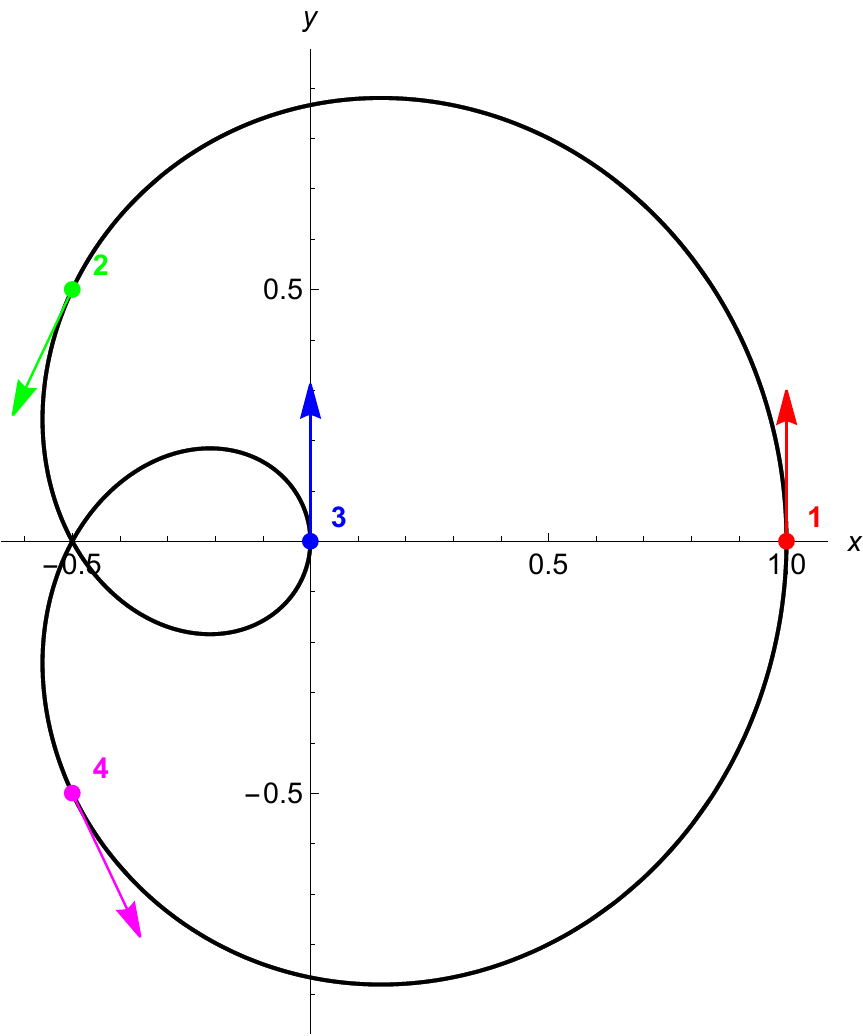}\\[-2pt]
\small (b) $C_4$-equivariant choreography
\end{minipage}
\caption{Four-body motion at the $1{:}2$ resonance with
$\mu=\omega=1$ and $(\kappa_1^{(4)},\kappa_2^{(4)})=(1,-1/2)$.
In (a), the particles form two synchronized dimers and the full configuration
is only $C_2$-equivariant.  In (b), all particles traverse the same
lima\c{c}on-type trace with delay $T/4$.  Initial data and collision checks are
given in the Supplementary Material.}
\label{fig:n4-examples}
\end{figure}

\subsection{Five bodies}

For
\[
\begin{aligned}
V_5
={}&
\frac{\mu\omega^2}{2}
\Bigl[
\kappa_1^{(5)}
\left(
r_{12}^2+r_{23}^2+r_{34}^2+r_{45}^2+r_{15}^2
\right)
+\kappa_2^{(5)}
\left(
r_{13}^2+r_{24}^2+r_{35}^2+r_{14}^2+r_{25}^2
\right)
\Bigr] \ ,
\end{aligned}
\]
the characters $(1,4)$ and $(2,3)$ form two real $D_5$-doublets.  Their
frequencies are $\Omega_1=\omega\sqrt{\lambda_-}$ and
$\Omega_2=\omega\sqrt{\lambda_+}$, with
\begin{equation}
\lambda_\pm=\frac12\left[
5(\kappa_1^{(5)}+\kappa_2^{(5)})
\pm\sqrt5(\kappa_1^{(5)}-\kappa_2^{(5)})\right].
\label{eq:n5-frequencies}
\end{equation}
The stable region has $\lambda_\pm>0$.  Maximal superintegrability requires
$p=\Omega_2/\Omega_1\in\QQ$.  Selecting the one-sided characters $m=1,2$
and the primitive period $T=2\pi/\Omega_1$ gives
\[
p\equiv2\pmod5.
\]
The lowest positive solution is again $p=2$.  One convenient coupling choice is
\begin{equation}
\kappa_1^{(5)}=\frac12\left(\frac3{\sqrt5}+1\right),
\qquad
\kappa_2^{(5)}=-\frac12\left(\frac3{\sqrt5}-1\right),
\label{eq:n5-couplings}
\end{equation}
for which $\Omega_2=2\Omega_1$.  Figure~\ref{fig:n5-examples} contrasts a
periodic $(2+2+1)$ fragmentation with the phase-matched five-body
choreography.

\begin{figure}[htbp]
\centering
\begin{minipage}[t]{0.48\textwidth}
\centering
\includegraphics[width=0.94\linewidth]{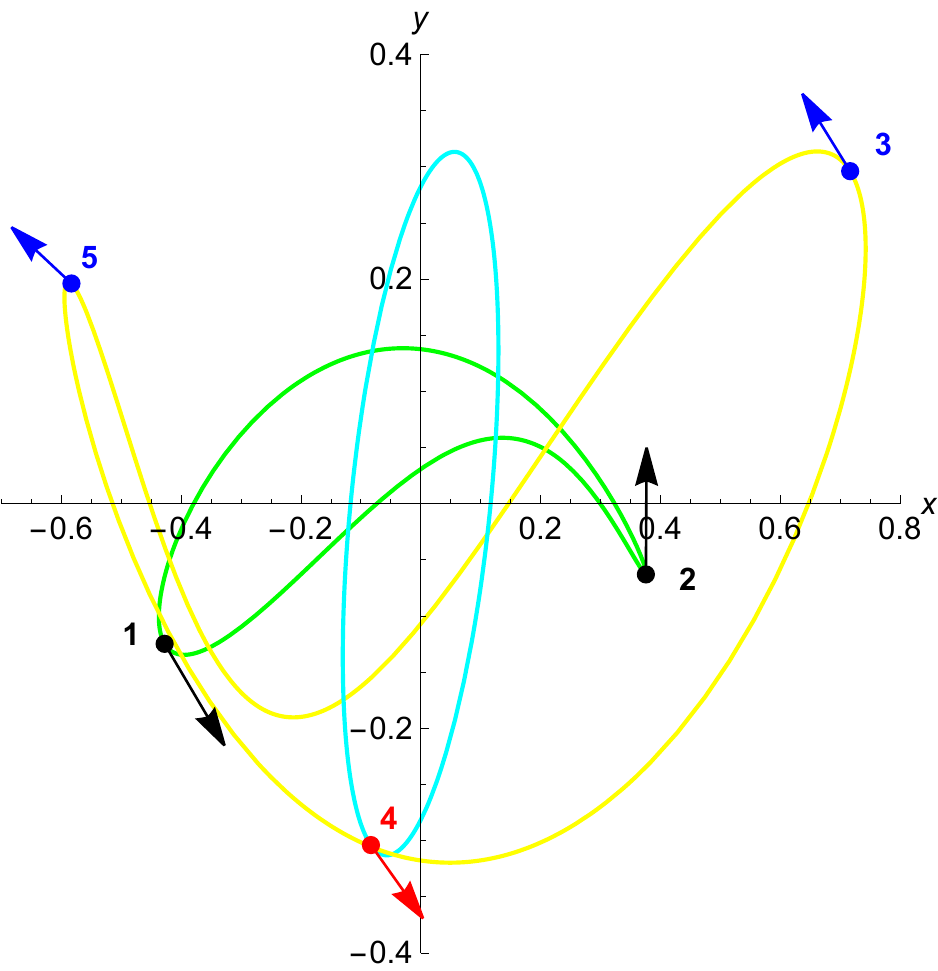}\\[-2pt]
\small (a) $(2+2+1)$ fragmentation
\end{minipage}\hfill
\begin{minipage}[t]{0.48\textwidth}
\centering
\includegraphics[width=0.94\linewidth]{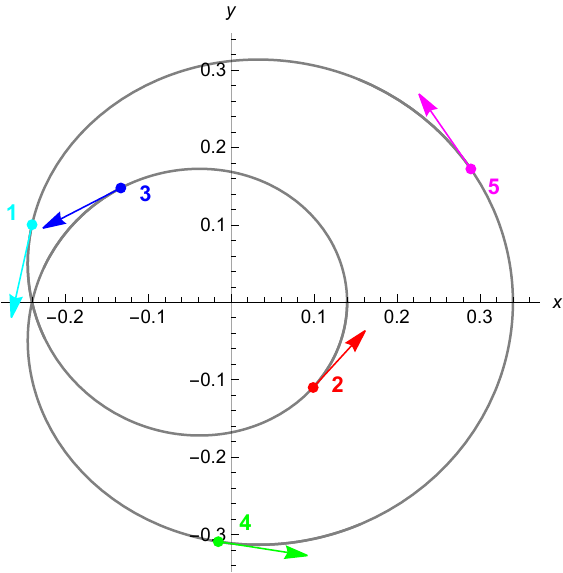}\\[-2pt]
\small (b) $C_5$-equivariant choreography
\end{minipage}
\caption{Five-body motion at the $1{:}2$ resonance for the couplings
\eqref{eq:n5-couplings} and $\mu=\omega=1$.  The periodic motion in (a) is not
globally $C_5$-equivariant.  In (b), the five particles traverse one
lima\c{c}on-type curve with delay $T/5$.  Initial data and collision checks are
given in the Supplementary Material.}
\label{fig:n5-examples}
\end{figure}

\section{Six bodies: resonance versus degeneracy}
\label{sec:n6}

The six-body problem has three inequivalent internal branches: the doublets
$(1,5)$ and $(2,4)$ and the Nyquist character $m=3$.  It is therefore the
first case with two independent resonance ratios.

\subsection{Spectrum and maximal superintegrability}

We use a single count for the opposite pairs and write
\begin{equation}
\begin{aligned}
\mathcal H_6
={}&
\frac{1}{2\mu}
\left(
\bm p_1^2+\bm p_2^2+\bm p_3^2
+\bm p_4^2+\bm p_5^2+\bm p_6^2
\right)
+\frac{\mu\omega^2}{2}
\Bigl[
\kappa_1^{(6)}
\left(
r_{12}^2+r_{23}^2+r_{34}^2
+r_{45}^2+r_{56}^2+r_{16}^2
\right)
\\
&
+\ \kappa_2^{(6)}
\left(
r_{13}^2+r_{24}^2+r_{35}^2
+r_{46}^2+r_{15}^2+r_{26}^2
\right)
\ + \ \kappa_3^{(6)}
\left(
r_{14}^2+r_{25}^2+r_{36}^2
\right)
\Bigr].
\end{aligned}
\label{eq:H6}
\end{equation}
The three stiffness eigenvalues are
\begin{equation}
\lambda_1=\kappa_1^{(6)}+3\kappa_2^{(6)}+2\kappa_3^{(6)},
\quad
\lambda_2=3\kappa_1^{(6)}+3\kappa_2^{(6)},
\quad
\lambda_3=4\kappa_1^{(6)}+2\kappa_3^{(6)},
\label{eq:n6-stiffness}
\end{equation}
and $\Omega_\ell=\omega\sqrt{\lambda_\ell}$.  Stability requires
$\lambda_\ell>0$.  The ten-degree-of-freedom relative Hamiltonian is maximally
superintegrable when
\begin{equation}
\frac{\Omega_2}{\Omega_1}\in\QQ,
\qquad
\frac{\Omega_3}{\Omega_1}\in\QQ,
\label{eq:n6-superintegrability}
\end{equation}
in which case the resonant-oscillator construction gives $19$ functionally
independent first integrals on a dense open set.

The linear map
$(\kappa_1^{(6)},\kappa_2^{(6)},\kappa_3^{(6)})\mapsto
(\lambda_1,\lambda_2,\lambda_3)$ in \eqref{eq:n6-stiffness} has determinant
$-36$ and is therefore invertible.  Consequently, prescribed ratios
$\Omega_2/\Omega_1=p$ and $\Omega_3/\Omega_1=q$ determine a unique ray in
coupling space: the overall scale fixes $\Omega_1$, while the direction is
fixed by $(\lambda_1,\lambda_2,\lambda_3)\propto(1,p^2,q^2)$.

\subsection{The nondegenerate \texorpdfstring{$1{:}2{:}3$}{1:2:3} resonance}

Let $T=2\pi/\Omega_1$ and activate the one-sided characters $m=1,2,3$.
Equation~\eqref{eq:phase-matching} gives
\[
e^{i\Omega_1T/6}=\zeta,
\qquad
e^{i\Omega_2T/6}=\zeta^2,
\qquad
e^{i\Omega_3T/6}=\zeta^3,
\]
whose primitive solution is
\begin{equation}
\Omega_1:\Omega_2:\Omega_3=1:2:3.
\label{eq:n6-123}
\end{equation}
For example,
\begin{equation}
(\kappa_1^{(6)},\kappa_2^{(6)},\kappa_3^{(6)})
=\left(2,-\frac23,\frac12\right)
\label{eq:n6-123-couplings}
\end{equation}
gives $(\lambda_1,\lambda_2,\lambda_3)=(1,4,9)$.

The phase condition also gives existence of collision-free, genuinely
three-branch choreographies.  With only $B_1\neq0$, the inverse Fourier
transform describes a rotating regular hexagon and has a strictly positive
minimum interparticle separation.  Turning on sufficiently small nonzero
$B_2$ and $B_3$ preserves that separation by continuity on the compact period
interval, while all three characters remain phase matched.  Thus the
$1{:}2{:}3$ Hamiltonian supports an open set, within its equivariant subspace,
of collision-free three-branch choreographies.  Proposition
\ref{prop:equivariant-subspace} nevertheless shows that this equivariant
subspace is nongeneric in the full periodic initial-data space.

An explicit member of this set is obtained for $\mu=\omega=1$ and the
couplings \eqref{eq:n6-123-couplings} by taking
\begin{equation}
U_1(t)=\sqrt6\,e^{it},\qquad
U_2(t)=\frac{\sqrt6}{3}e^{2it},\qquad
U_3(t)=\frac{\sqrt6}{6}e^{3it},
\qquad U_4(t)=U_5(t)=0.
\label{eq:n6-explicit-123}
\end{equation}
Writing $\theta_j=2\pi(j-1)/6$, the particle trajectories are
\begin{equation}
r_j(t)=\gamma(t+\theta_j),
\qquad
\gamma(s)=e^{is}+\frac13e^{2is}+\frac16e^{3is}.
\label{eq:n6-generating-curve}
\end{equation}
All three internal branches are nonzero.  Moreover, for every distinct pair
the phase difference is $\Delta_\ell=2\pi\ell/6$, with
$\ell=1,\ldots,5$, and the triangle inequality gives
\begin{equation}
\left|\gamma(s+\Delta_\ell)-\gamma(s)\right|
\geq
2\left|\sin\frac{\Delta_\ell}{2}\right|
-\frac23\left|\sin\Delta_\ell\right|
-\frac13\left|\sin\frac{3\Delta_\ell}{2}\right|
\geq \frac{2-\sqrt3}{3}>0.
\label{eq:n6-collision-bound}
\end{equation}
Hence the choreography is analytically collision free.  Direct minimization
over one period gives $d_{\min}=0.577350$; the configuration at $t=0$ is shown
in Fig.~\ref{fig:n6-full-123}.

\begin{figure}[htbp]
\centering
\includegraphics[width=0.62\linewidth]{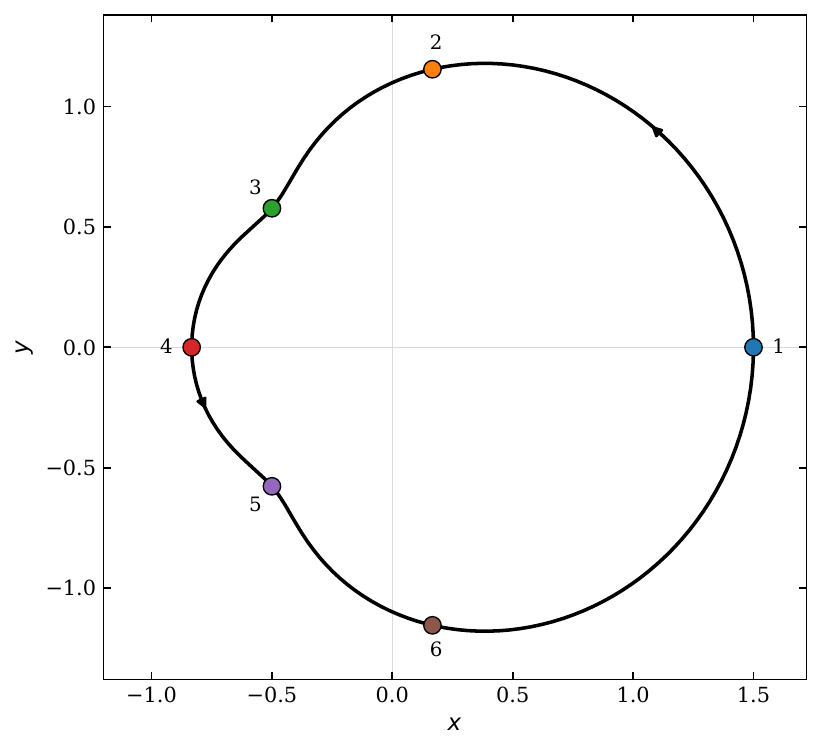}
\caption{Explicit full three-branch $C_6$ choreography at the
$1{:}2{:}3$ resonance.  The black curve is the generating trace
$\gamma$ in \eqref{eq:n6-generating-curve}; colored markers show the six
particles at $t=0$.  Every particle traverses the same curve with delay $T/6$,
and all three Fourier amplitudes in \eqref{eq:n6-explicit-123} are nonzero.}
\label{fig:n6-full-123}
\end{figure}

\subsection{The exactly degenerate \texorpdfstring{$1{:}2{:}2$}{1:2:2} spectrum}

Now impose
\begin{equation}
\Omega_1:\Omega_2:\Omega_3=1:2:2,
\qquad \Omega_2=\Omega_3.
\label{eq:n6-122}
\end{equation}
Equations~\eqref{eq:n6-stiffness} give the ray
\begin{equation}
(\kappa_1^{(6)},\kappa_2^{(6)},\kappa_3^{(6)})=(7a,a,-2a),
\qquad a>0.
\label{eq:n6-122-ray}
\end{equation}
Under $T/6$, the degenerate branches acquire the common temporal multiplier
\[
e^{i\Omega_2T/6}=e^{i\Omega_3T/6}=\zeta^2.
\]
This matches the $m=2$ character but not the Nyquist character $m=3$, which
requires $\zeta^3=-1$.  Multiplying either amplitude by a constant phase does
not change its temporal multiplier.  Hence simultaneous nonzero one-sided
$m=2$ and $m=3$ excitation is incompatible with full $C_6$-equivariance.
Exact spectral degeneracy does not merge the two characters.

A choreography remains possible after restricting the initial data to a
compatible invariant subspace.  Setting $U_3=0$ leaves the phase-matched
$m=1,2$ content with active ratio $1{:}2$.  Taking $a=1/2$ and
$\omega=1/\sqrt3$ gives
\[
(\kappa_1^{(6)},\kappa_2^{(6)},\kappa_3^{(6)})
=\left(\frac72,\frac12,-1\right),
\qquad
(\Omega_1,\Omega_2,\Omega_3)=(1,2,2).
\]

The choreography in Fig.~\ref{fig:n6-examples}(b) is therefore a reduced
$1{:}2$ realization embedded in the degenerate Hamiltonian; it is not produced
by the degeneracy itself.

\begin{figure}[htbp]
\centering
\begin{minipage}[t]{0.48\textwidth}
\centering
\includegraphics[width=0.90\linewidth]{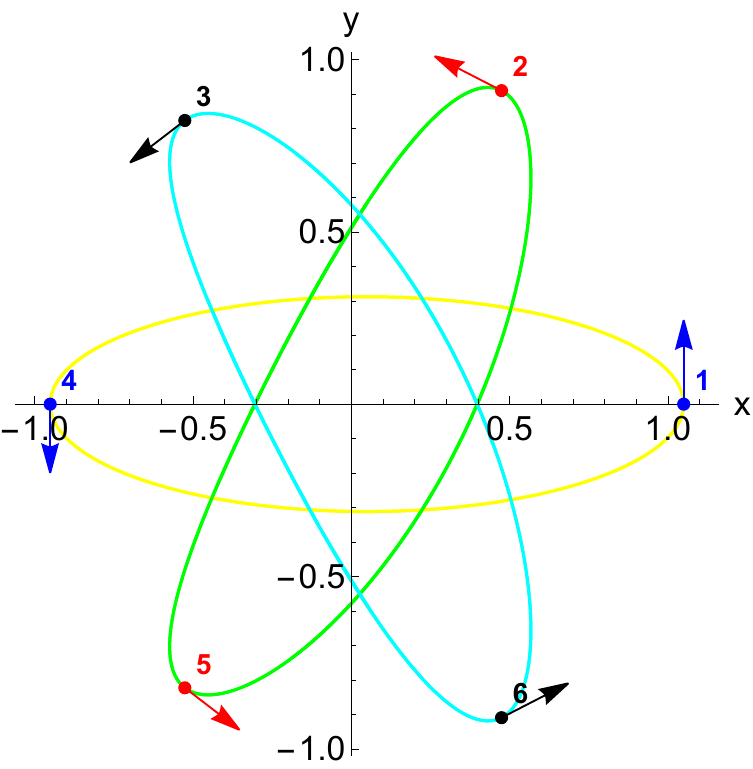}\\[-2pt]
\small (a) $(2+2+2)$ fragmentation
\end{minipage}\hfill
\begin{minipage}[t]{0.48\textwidth}
\centering
\includegraphics[width=0.90\linewidth]{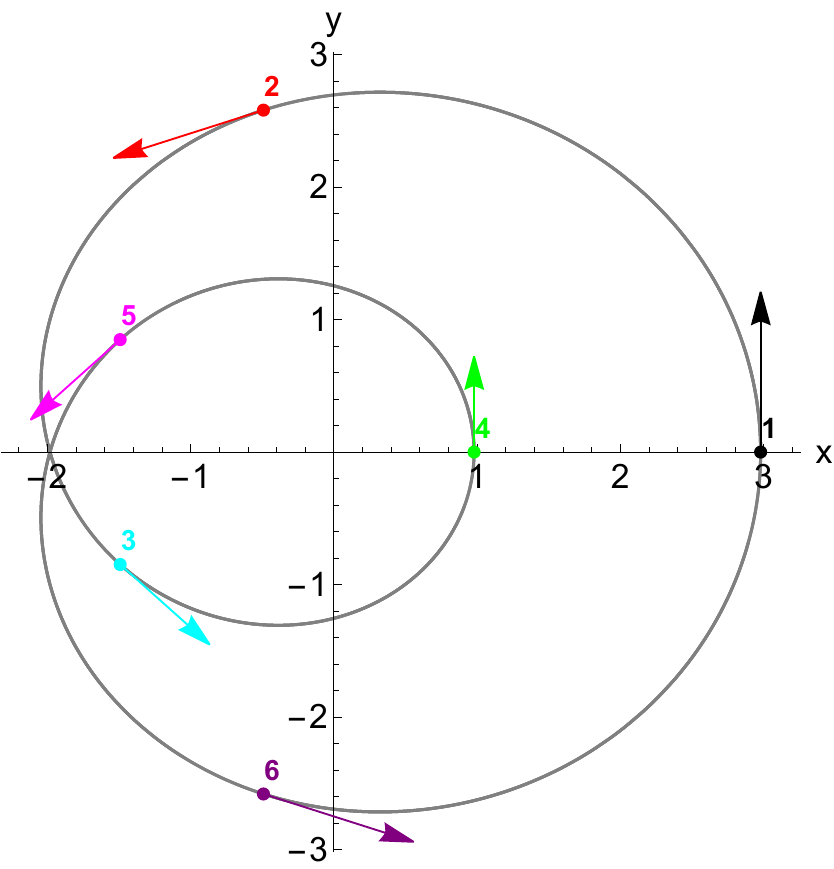}\\[-2pt]
\small (b) reduced $C_6$ choreography
\end{minipage}
\caption{Two six-body realizations.  Panel (a) uses
$\mu=\omega=1$ and the $1{:}2{:}3$ couplings
\eqref{eq:n6-123-couplings}; the particles form three synchronized dimers and
the full motion is not $C_6$-equivariant.  Panel (b) uses
$\mu=1$, $\omega=1/\sqrt3$, and the $1{:}2{:}2$ couplings
$(7/2,1/2,-1)$, but lies in the invariant subspace $U_3=0$; all particles
traverse one trace with delay $T/6$.  Initial data and collision checks are
given in the Supplementary Material.}
\label{fig:n6-examples}
\end{figure}

The three spectral mechanisms are summarized in
Table~\ref{tab:n6-summary}.  In particular, branch suppression is different
from phase matching across a full three-branch spectrum.

\begin{table}[htbp]
\centering
\small
\renewcommand{\arraystretch}{1.15}
\begin{tabular}{>{\raggedright\arraybackslash}p{0.23\textwidth}
                >{\centering\arraybackslash}p{0.18\textwidth}
                >{\raggedright\arraybackslash}p{0.51\textwidth}}
\toprule
Setting & Active spectrum & Consequence\\
\midrule
Nondegenerate resonance & $1{:}2{:}3$ &
The one-sided $m=1,2,3$ components are simultaneously phase matched; a
collision-free full three-branch choreography exists.\\
Exact degeneracy & $1{:}2{:}2$ &
The $m=2$ and $m=3$ components have the same temporal phase but require
different character phases; generic simultaneous excitation is obstructed.\\
Invariant-subspace reduction & $1{:}2$ &
Suppressing $U_3$ leaves compatible $m=1,2$ content and yields a reduced
six-body choreography.\\
\bottomrule
\end{tabular}
\caption{The three distinct six-body mechanisms.}
\label{tab:n6-summary}
\end{table}

\section{Multi-trace motion and fragmentation}
\label{sec:fragmentation}

Frequency commensurability, full equivariance, choreography, and fragmentation
are distinct notions.  Commensurability closes the active motion.
Equation~\eqref{eq:character-equivariance} selects the configurations with the
global cyclic space--time symmetry, and Corollary~\ref{cor:equiv-choreo} turns
that symmetry into a single-trace choreography in the collision-free class.
Failure of phase matching rules out such a choreography, but does not by
itself determine the geometry of the remaining periodic motion.

We call a periodic motion \emph{choreographically fragmented} only when the
particles split into two or more synchronized subsets, each subset traversing
its own closed curve with a uniform internal delay.  Thus fragmentation is a
structured possibility inside the non-equivariant periodic set, not a synonym
for every phase mismatch.  The displayed examples realize $(2+2)$,
$(2+2+1)$, and $(2+2+2)$ patterns.  Their residual cyclic symmetries act on the
submotions rather than on the complete $n$-particle configuration.

For $n>6$, the number of independent real sectors is $\lfloor n/2\rfloor$.
Consequently, more resonance integers and more pairs of temporal orientations
enter Proposition~\ref{prop:equivariant-subspace}.  The decision procedure is
nevertheless unchanged:
\begin{enumerate}
\item determine the stable branch frequencies from
      \eqref{eq:general-spectrum};
\item choose a common period for the commensurate active branches;
\item apply the congruences $\pm k_m\equiv m\pmod n$ to every Fourier
      amplitude;
\item reconstruct the configuration and test collision-freeness.
\end{enumerate}
The resulting equivariant space is explicit and linear, whereas the geometry
of its non-equivariant complement may range from unstructured multi-trace
motion to synchronized fragmentation.  The low-$n$ examples are therefore
illustrations of a general character-level obstruction, not an exhaustive
classification of fragmentations.

\section{Conclusions}\label{sec:conclusions}

Quadratic $D_n$-invariant many-body Hamiltonians provide an exact setting in
which spectral closure and choreographic symmetry can be separated.  The
normal frequencies are determined by the circulant spectrum
\eqref{eq:general-spectrum}.  Rational commensurability of all independent
branches makes the full relative Hamiltonian maximally superintegrable, while
commensurability of only the active branches closes the corresponding orbit or
invariant subsystem.  Choreography requires the additional character-wise
condition \eqref{eq:character-equivariance}.

For a fixed commensurate spectrum, Proposition
\ref{prop:equivariant-subspace} identifies the complete equivariant subspace of
initial data and its codimension.  It makes precise why resonant periodicity is
generic within a maximally superintegrable oscillator but choreography is not:
incompatible positive- or negative-frequency amplitudes must vanish.  Constant
relative phases cannot remove the obstruction.

The six-body system exhibits the essential new mechanism.  Its
$1{:}2{:}3$ spectrum phase matches the three characters $m=1,2,3$ and supports
collision-free three-branch choreographies.  In the $1{:}2{:}2$ spectrum, the
additional degeneracy $\Omega_2=\Omega_3$ is instead incompatible with
simultaneous $m=2$ and Nyquist $m=3$ excitation.  Suppressing the Nyquist
branch yields a valid reduced choreography, but the degeneracy itself is not
its dynamical origin.  Periodic data outside the equivariant subspace may
organize into synchronized multi-trace fragments.

The same framework applies to arbitrary $n$ and suggests two natural
extensions: persistence of the equivariant subspaces under
symmetry-preserving nonlinear perturbations, and a subgroup-based
classification of the possible fragmented configurations.

\subsection*{Declaration of competing interest}

The authors declare that they have no known competing financial interests or personal relationships that could have appeared to influence the work reported in this paper.

\subsection*{Data availability}

No external datasets were used in this work. The trajectories are generated from analytic formulas and initial conditions reported in the main text and Supplementary Material.

\bibliographystyle{unsrt}
\bibliography{references}

@incollection{ChencinerGerverMontgomerySimo2002,
  author    = {Chenciner, Alain and Gerver, Joseph and Montgomery, Richard and Sim{\'o}, Carles},
  title     = {Simple Choreographic Motions of {N} Bodies: A Preliminary Study},
  booktitle = {Geometry, Mechanics, and Dynamics},
  editor    = {Newton, Philip and Holmes, Philip and Weinstein, Alan},
  publisher = {Springer},
  address   = {New York, NY},
  year      = {2002},
  pages     = {287--308},
  doi       = {10.1007/0-387-21791-6_9}
}

@article{Yu2017,
  author  = {Yu, Guowei},
  title   = {Simple Choreographies of the Planar Newtonian {$N$}-Body Problem},
  journal = {Archive for Rational Mechanics and Analysis},
  volume  = {225},
  pages   = {901--935},
  year    = {2017},
  doi     = {10.1007/s00205-017-1116-1}
}

@article{moore1993braids,
  author    = {Moore, Christopher},
  title     = {Braids in classical dynamics},
  journal   = {Physical Review Letters},
  volume    = {70},
  number    = {24},
  pages     = {3675--3679},
  year      = {1993},
  doi       = {10.1103/PhysRevLett.70.3675}
}

@article{chenciner2000remarkable,
  author    = {Chenciner, Alain and Montgomery, Richard},
  title     = {A remarkable periodic solution of the three-body problem in the case of equal masses},
  journal   = {Annals of Mathematics},
  volume    = {152},
  number    = {3},
  pages     = {881--901},
  year      = {2000},
  doi       = {10.2307/2661357}
}

@incollection{simo2002dynamical,
  author    = {Sim{\'o}, Carles},
  title     = {Dynamical properties of the figure-eight solution of the three-body problem},
  booktitle = {Celestial Mechanics},
  editor    = {Celletti, Alessandra and Sim{\'o}, Carles},
  series    = {Contemporary Mathematics},
  volume    = {292},
  pages     = {209--228},
  publisher = {American Mathematical Society},
  address   = {Providence, RI},
  year      = {2002},
  isbn      = {978-0-8218-2881-7}
}

@article{ferrario2004relative,
  author    = {Ferrario, Davide and Terracini, Susanna},
  title     = {On the existence of collisionless equivariant minimizers for the classical $n$-body problem},
  journal   = {Inventiones Mathematicae},
  volume    = {155},
  number    = {2},
  pages     = {305--362},
  year      = {2004},
  doi       = {10.1007/s00222-003-0322-7}
}

@article{MontaldiSteckles2013,
  author  = {Montaldi, James and Steckles, Katrina},
  title   = {Classification of symmetry groups for planar {$n$}-body choreographies},
  journal = {Forum of Mathematics, Sigma},
  volume  = {1},
  pages   = {e5},
  year    = {2013},
  doi     = {10.1017/fms.2013.5},
  eprint  = {1305.0470},
  archivePrefix = {arXiv}
}

@article{fernandez2025fourbody,
  author    = {Fern{\'a}ndez-Guasti, Manuel},
  title     = {Analytic four-body lima{\c{c}}on choreography},
  journal   = {Celestial Mechanics and Dynamical Astronomy},
  volume    = {137},
  number    = {4},
  pages     = {4},
  year      = {2025},
  doi       = {10.1007/s10569-024-10235-x},
  url       = {https://doi.org/10.1007/s10569-024-10235-x}
}

@article{escobar2025four,
  author    = {Escobar-Ruiz, Adri{\'a}n M. and Fern{\'a}ndez-Guasti, Manuel},
  title     = {On the four-body lima{\c{c}}on choreography: maximal superintegrability and choreographic fragmentation},
  journal   = {Celestial Mechanics and Dynamical Astronomy},
  volume    = {137},
  number    = {24},
  pages     = {24},
  year      = {2025},
  doi       = {10.1007/s10569-025-10255-1}
}

@article{fernandez2025nbody,
  title = {N-body choreographies on a p-limaçon curve},
journal = {Journal of Differential Equations},
volume = {454},
pages = {113940},
year = {2026},
issn = {0022-0396},
doi = {10.1016/j.jde.2025.113940},
url = {https://www.sciencedirect.com/science/article/pii/S0022039625009672},
author = {Manuel Fernandez-Guasti and Toshiaki Fujiwara and Ernesto Pérez-Chavela and Shuqiang Zhu}
}

@article{miller2013superintegrability,
  author    = {Miller, Willard and Post, Sarah and Winternitz, Pavel},
  title     = {Classical and quantum superintegrability with applications},
  journal   = {Journal of Physics A: Mathematical and Theoretical},
  volume    = {46},
  pages     = {423001},
  year      = {2013},
  doi       = {10.1088/1751-8113/46/42/423001}
}

@book{tempesta2004superintegrability,
  editor    = {Tempesta, Piergiulio and Winternitz, Pavel and Harnad, Jonathan and Miller Jr., Willard and Pogosyan, George and Rodríguez, Miguel A.},
  title     = {Superintegrability in Classical and Quantum Systems},
  series    = {CRM Proceedings \& Lecture Notes},
  volume    = {37},
  publisher = {American Mathematical Society},
  address   = {Providence, RI},
  year      = {2004},
  isbn      = {978-0-8218-3329-2}
}

@article{escobar2024particular,
doi = {10.1088/1751-8121/ad2a1c},
url = {https://doi.org/10.1088/1751-8121/ad2a1c},
year = {2024},
month = {feb},
publisher = {IOP Publishing},
volume = {57},
number = {10},
pages = {105202},
author = {Escobar-Ruiz, A M and Azuaje, R},
title = {On particular integrability in classical mechanics},
journal = {Journal of Physics A: Mathematical and Theoretical}
}

@article{turbiner2013particular,
  
doi = {10.1088/1751-8113/46/2/025203},
url = {https://doi.org/10.1088/1751-8113/46/2/025203},
year = {2012},
month = {dec},
publisher = {IOP Publishing},
volume = {46},
number = {2},
pages = {025203},
author = {Turbiner, Alexander V},
title = {Particular integrability and (quasi)-exact-solvability},
journal = {Journal of Physics A: Mathematical and Theoretical}
}

@article{barutello2006double,
  author  = {Barutello, Vivina and Terracini, Susanna},
  title   = {Double choreographical solutions for $n$‐body type problems},
  journal = {Celestial Mechanics and Dynamical Astronomy},
  volume  = {95},
  number  = {1--4},
  pages   = {67--80},
  year    = {2006},
  doi     = {10.1007/s10569-006-9030-0}
}

@article{montgomery1998braid,
  author  = {Montgomery, Richard},
  title   = {The $N$-body problem, the braid group, and action-minimizing periodic solutions},
  journal = {Nonlinearity},
  volume  = {11},
  number  = {2},
  pages   = {363--376},
  year    = {1998},
  doi     = {10.1088/0951-7715/11/2/011}
}

@article{kapela2007symbolic,
 doi = {10.1088/0951-7715/20/5/010},
url = {https://doi.org/10.1088/0951-7715/20/5/010},
year = {2007},
month = {apr},
publisher = {},
volume = {20},
number = {5},
pages = {1241},
author = {Kapela, Tomasz and Simó, Carles},
title = {Computer assisted proofs for nonsymmetric planar choreographies and for stability of the Eight},
journal = {Nonlinearity}
}

@article{fujiwara2003lemniscate,
doi = {10.1088/0305-4470/36/11/310},
url = {https://doi.org/10.1088/0305-4470/36/11/310},
year = {2003},
month = {mar},
publisher = {},
volume = {36},
number = {11},
pages = {2791},
author = {Toshiaki Fujiwara and Hiroshi Fukuda and Hiroshi Ozaki},
title = {Choreographic three bodies on the lemniscate},
journal = {Journal of Physics A: Mathematical and General}
}

@article{BarutelloFerrarioTerracini2011,
  author  = {Barutello, Vivina and Ferrario, Davide and Terracini, Susanna},
  title   = {Symmetry groups of the planar $N$-body problem and action-minimizing solutions},
  journal = {Archive for Rational Mechanics and Analysis},
  volume  = {190},
  number  = {1},
  pages   = {189--237},
  year    = {2008},
  doi     = {10.1007/s00205-008-0131-7}
}

@article{Evans1990Superintegrability,
  author       = {Evans, N. W.},
  title        = {Superintegrability in Classical Mechanics},
  journal      = {Physical Review A},
  volume       = {41},
  number       = {10},
  pages        = {5666--5676},
  year         = {1990},
  doi          = {10.1103/PhysRevA.41.5666},
  publisher    = {American Physical Society}
}

@book{Perelomov1990IntegrableSystems,
  author       = {Perelomov, A. M.},
  title        = {Integrable Systems of Classical Mechanics and Lie Algebras},
  year         = {1990},
  publisher    = {Birkh\"auser},
  address      = {Basel},
  isbn         = {9783764323363},
  note         = {}
}

\clearpage

\includepdf[
  pages=-,
  fitpaper=true,
  pagecommand={\thispagestyle{empty}}
]{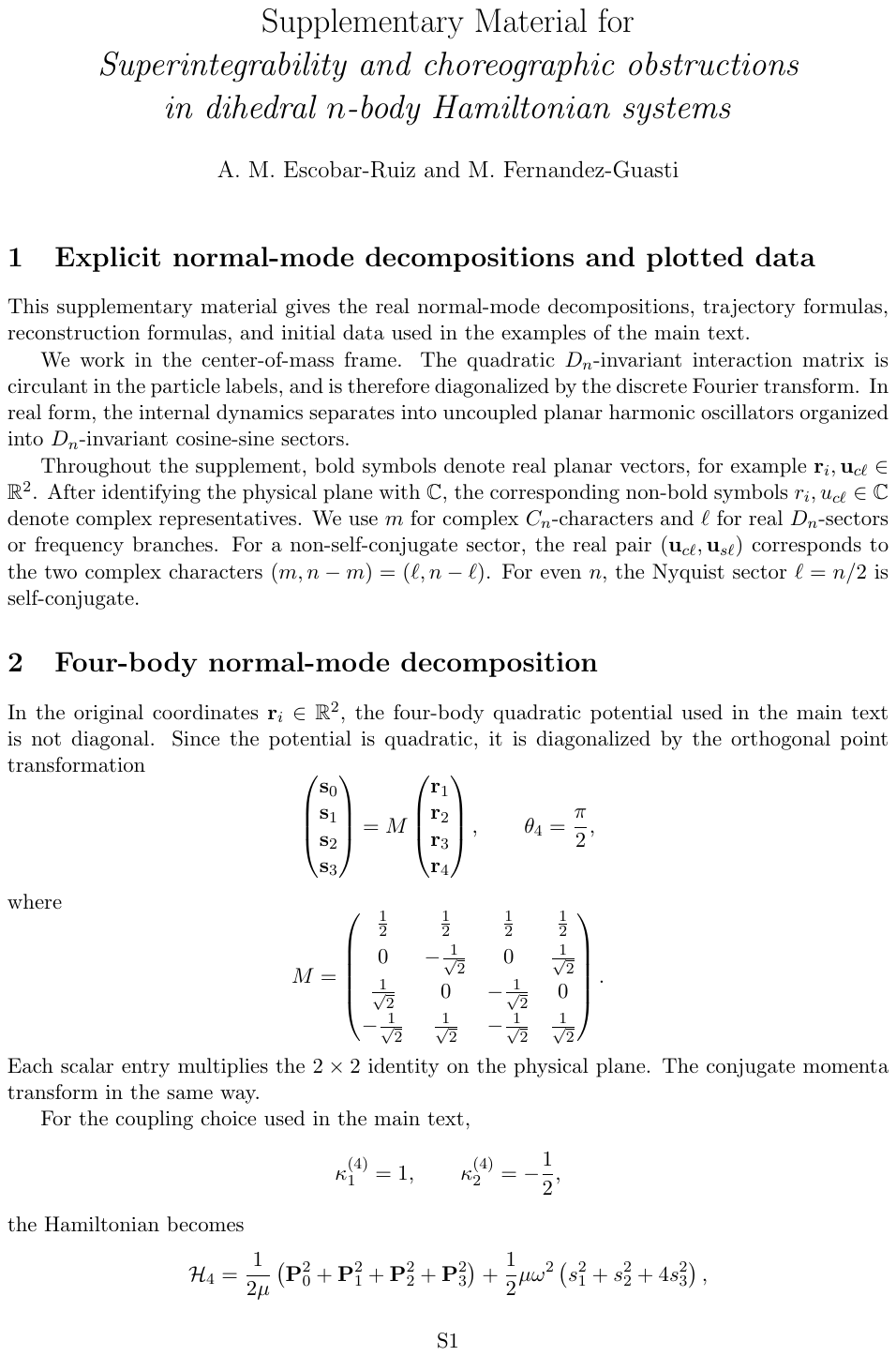}

\end{document}